\definecolor{okabe1}{HTML}{000000}
\definecolor{okabe2}{HTML}{E69F00}
\definecolor{okabe3}{HTML}{56B4E9}
\definecolor{okabe4}{HTML}{009E73}
\definecolor{okabe5}{HTML}{F0E442}
\definecolor{okabe6}{HTML}{0072B2}
\definecolor{okabe7}{HTML}{D55E00}
\definecolor{okabe8}{HTML}{CC79A7}
\begin{document}

\title{Privacy-Preserving Learning-Augmented Data Structures}


\author{Prabhav Goyal}
\email{prabhavg@uci.edu}
\affiliation{%
  \institution{University of California, Irvine}
  \city{Irvine}
  \country{USA}
}

\author{Vinesh Sridhar}
\email{vineshs1@uci.edu}
\orcid{0009-0009-3549-9589}
\affiliation{%
  \institution{University of California, Irvine}
  \city{Irvine}
  \country{USA}
}

\author{Wilson Zheng}
\email{wilsonz5@uci.edu}
\affiliation{%
  \institution{University of California, Irvine}
  \city{Irvine}
  \country{USA}
}


\begin{abstract}
\emph{Learning-augmented data structures} use predicted frequency estimates to retrieve frequently occurring database elements faster than standard data structures. 
Recent work has developed data structures that optimally exploit these frequency estimates while maintaining robustness to adversarial prediction errors. 
However, the privacy and security implications of this setting remain largely unexplored. 

In the event of a security breach, data structures should reveal minimal information beyond their current contents. 
This is even more crucial for learning-augmented data structures, whose layout adapts to the data. 
A data structure is \emph{history independent} if its memory representation reveals no information about past operations except what is inferred from its current 
contents. 
In this work, we take the first step towards privacy and security guarantees in this setting by proposing the first learning-augmented data structure that is strongly history independent, robust, and supports dynamic updates. 


To achieve this, we introduce two techniques: \emph{thresholding}, which automatically makes any learning-augmented data structure robust, and \emph{pairing}, a simple technique that provides strong history independence in the dynamic setting.
Our experimental results demonstrate a tradeoff between security and efficiency but are still competitive with the state of the art.
\end{abstract}

\begin{CCSXML}
<ccs2012>
   <concept>
       <concept_id>10002951.10003317.10003325</concept_id>
       <concept_desc>Information systems~Information retrieval query processing</concept_desc>
       <concept_significance>500</concept_significance>
       </concept>
   <concept>
       <concept_id>10002978.10003018</concept_id>
       <concept_desc>Security and privacy~Database and storage security</concept_desc>
       <concept_significance>500</concept_significance>
       </concept>
 </ccs2012>
\end{CCSXML}

\ccsdesc[500]{Information systems~Information retrieval query processing}
\ccsdesc[500]{Security and privacy~Database and storage security}
\keywords{learning-augmented data structures, frequency estimation, privacy, history independence}


\maketitle

\section{Introduction}

Dictionaries are a fundamental class of data structures that support 
updates and retrievals of key-value pairs. Many classic 
data structures, such as binary search trees and hash tables, 
efficiently implement dictionaries.
In general, data structures 
that support \emph{inexact searches}, such as predecessor and range 
queries, achieve retrieval times of at best $O(\log n)$. However, 
these structures often ignore information about the keys' access frequencies. 

Recently, Lin {\it et al}.~\cite{lin2022learning} introduced the first 
\emph{learning-augmented search tree}. Here, each key $k_i$ is assigned 
an access frequency estimate $f_i \in (0,1]$  by a machine learning model. 
The authors design a search tree that can retrieve any key $k_i$ in 
$O(\log 1/f_i)$ time while supporting inexact queries~\cite{lin2022learning}. 
This property is known as \emph{consistency} w.r.t the assigned 
frequency distribution. 
Zeynali {\it et al}.~\cite{zeynali2024robust} subsequently proposed a 
learning-augmented dictionary that 
also achieves \emph{robustness} against 
adversarial models; that is, it guarantees that no key ever exceeds a 
depth of $O(\log n)$ even if an adversarial model can arbitrarily edit frequency estimates. Other recent works that study consistent and robust learning-augmented data structures include~\cite{chen2025on,fu2024learning}. 
See also, e.g.,  ~\cite{kraska2018case, benomar2024learning,ferragina2020learned,sato2023fast,ding2020alex,nathan2020learning,chen2015hiflash,polak2024learning,qin2025piecewise,mitzenmacher2018model,mitzenmacher2019scheduling,purohit2018improving,benomar2025tradeoffs,hsu2019learning,mitzenmacher2022algorithms}, which apply machine learning predictions to improve classical algorithms in general. 

In contrast to numerous works on performance guarantees, few works have studied privacy 
and security needs in this setting. 
Since these structures adapt to 
their input distributions, they leak more information about the underlying data than 
their classic counterparts, making them more vulnerable to data breaches. 
Even the splay tree~\cite{sleator1985self}, another structure which 
achieves robustness and consistency among other nice properties, 
fails to prevent such leakage as its layout prioritizes recent insertions 
and queries. 

This issue is amplified in today's computing landscape, where large-scale 
data breaches are increasingly common. In 2024 alone, more than 3,000
confirmed breaches in the United States affected an estimated 1.35 billion people, more than 91\% of which were attributed to cyberattacks~\cite{data-breach}. 
These risks are projected to grow as artificial intelligence continues to 
integrate deeply into critical infrastructure. 

One way to mitigate this growing problem is by implementing \emph{history independence}~\cite{hartline2005characterizing,naor2001anti}. 
A data structure is history independent if its internal memory 
representation depends solely on its current key-value contents and not 
on the sequence of past operations. This restricts the leakage 
of historical information,  providing protection against data 
breaches. 


In this work, we take the first step towards privacy protection in the learning-augmented setting. 
We present the first robust, consistent, dynamic, and history independent learning-augmented data structure.
We give results for both weak and strong history independence which presents a tradeoff between security and efficiency. 
To achieve these guarantees, we introduce two general techniques. 
First, \emph{thresholding}, which makes any consistent learning-augmented data structure robust at little cost.
Second, \emph{pairing}, a new construction that maintains two coordinated copies of the dataset to ensure strong history independence in the dynamic setting. 

Our experimental results show that thresholding produces learned data structures that match the performance of the state-of-the-art while offering improvements in space usage, 
and that pairing gives strongly history independent data structures at the cost of a $2\times$ increase in search times.
This establishes thresholding and pairing as simple mechanisms to construct learning-augmented data structures that are both efficient and secure. 

\section{History Independent Learned Data Structures}


For a review of history independence, see Appendix~\ref{sec:hist-ind-ex}. 
We begin by showing that the only robust learning-augmented data structure 
in the literature that supports dynamic updates, 
\texttt{RobustSL}~\cite{zeynali2024robust}, is \emph{not} history independent. 
Specifically, we exploit its amortized update scheme by constructing a 
counterexample where two distinct operation sequences produce the same 
external state but yield distinguishable internal representations. 

We first review \texttt{RobustSL}'s update scheme. Let $n$ denote the current number 
of elements and $N$ denote a cutoff value used in the update scheme.\footnote{Zeynali {\it et al.}~\cite{zeynali2024robust} use $n_t$ and $n$ in place of $n$ and $N$.}
$N$ 
is initialized to 4 and $n$ begins at 0. Elements with frequencies less 
than $1/N$ are handled by \texttt{RobustSL} in a special way. To achieve robustness, $O(\log N)$ must equal $O(\log n)$ at all times, so the structure is
periodically destroyed and rebuilt to update $N$. 


If, after an insertion, $n$ equals $N$, then we set $N \leftarrow N^2$ and reconstruct the structure using this new cutoff value in $O(n\log n)$ time. If, after a deletion, $n$ equals $N^{1/4}$, then we set $N \leftarrow \sqrt N$ and do the same. An amortized analysis shows that this produces $O(\log n)$-time insertion and deletion with high probability in $n$~\cite{zeynali2024robust}.\footnote{
An event succeeds with high probability (w.h.p.) in $n$ if it succeeds with probability greater than $1 - 1/n$. 
} 

Now consider any arbitrary state, $A$, of the algorithm and two distinct sets 
of operations, $X$ and $Y$, which nevertheless take $A$ to 
the same final state, $B$. Let $c \leftarrow (N-n)$. In $X$, we insert $c$ identical 
elements into the structure and then remove 1 of those elements. In $Y$, we 
insert $c-1$ identical elements into the structure. Both sets of 
operations take us from $A$ to $B$, however $X$ makes $N$ change, whereas 
$Y$ does not. Since $N$ influences the location of elements in the structure, 
one can distinguish the memory representations created by $X$ and $Y$ despite 
the fact that both add the same content to $A$. Thus, \texttt{RobustSL}, 
under the given updating scheme, is not history independent.

\subsection{A Weakly History Independent Update Scheme}\label{sec:history-ind-scheme}
Now we present the first 
history independent, robust, and dynamic learning-augmented data structure.
We do so by modifying the update scheme of \texttt{RobustSL} such that it is 
now history independent (in the static setting, skip-lists such as \texttt{RobustSL}, have been shown to be history independent~\cite{golovin2008uniquely}.
Thus, it is sufficient to provide a history independent
update scheme to obtain history independence in the dynamic setting). 

Our scheme is based off of a scheme by Hartline {\it et al.}~\cite{hartline2005characterizing} that constructs a weakly history independent 
dynamic hash table.
We appear to be the first that extends it to a data structure with logarithmic-time updates as opposed 
to constant-time updates. 
Their scheme has two basic ideas. First, rather than having an upper and lower cutoff for changing $N$,
each change in $N$ is associated with a single value of $n$ (e.g., every time $n$ doubles, double $N$ and rebuild).
This way, every value of $n$ is associated with a canonical $N$, avoiding the issue we raise in the
previous section. 

However, this cutoff scheme can be abused by an adversary to trigger repeated rebuilds by inserting and deleting elements around a given cutoff value. 
Thus, their second contribution introduces randomness in the location of these cutoffs.
Specifically, $N$ is now conceived as a random variable that has a $O(1/n)$ chance 
of updating after each operation.
As in, e.g., ~\cite{hartline2005characterizing,goodrich2017auditable,molnar2006tamper,chen2015hiflash,goodrich2016more}, we assume that this source of randomness is kept private 
from an adversary, which prevents the above issue. 

Their scheme works like so. Before each insertion, if $N = n$, then randomly choose a value in $\{n+1, \ldots, 2(n+1) - 1\}$ and rebuild the structure with $N$ set to that value. Otherwise $N > n$, and they do the following. With probability $1/(n+1)$, rebuild the structure with $N \leftarrow 2n$ and with probability $1/(n+1)$ do the same with $N \leftarrow 2n+1$. Otherwise, do not rebuild. After this the new item is inserted. Deletions work similarly. We initially delete the element. Then, if $n \leq N/2$, resize $N$ to a value uniformly chosen between $\{n, \ldots 2n-1\}$. If $n > N/2$, then set $N \leftarrow n$ with probability $1/n$. Otherwise, do nothing. 

Hartline {\it et al.} show that each update has an $O(1/n)$ chance of rebuilding~\cite{hartline2005characterizing}. 
The cost of rebuilding a robust learning-augmented data structure is $O(n\log n)$, since the structure has depth $O(\log n)$. Thus, in our setting, the expected update cost is $O((n\log n) / n) = O(\log n)$. We conclude the following.

\begin{theorem}\label{thm:hist-ind}
There exists a weakly history independent, consistent, and robust learning-augmented data structure that supports $O(\log n)$-time dynamic updates in expectation.
\end{theorem}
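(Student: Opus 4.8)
The plan is to instantiate \texttt{RobustSL} with the randomized rebuild scheme described above and then verify the four claimed guarantees separately, reusing as much of the existing analysis as possible. Since \cite{golovin2008uniquely} shows that a skip-list in the static setting (fixed element set and fixed cutoff) has a canonical memory representation, it suffices to check that our modified \emph{update} scheme does not leak history; the remaining properties are then inherited from \texttt{RobustSL} with minor bookkeeping.

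For weak history independence, I would show that the distribution over the pair (current cutoff $N$, internal skip-list coin flips) conditioned on the current element set $S$ depends only on $S$ and not on the operation sequence producing it. The node-height coins are resampled on every rebuild and, between rebuilds, are handled exactly as in a static skip-list, so Golovin's argument covers that component once $N$ is fixed. The substantive point is that $N$, conditioned on $n = |S|$, follows a fixed distribution $D_n$ independent of history. I would prove this by induction on the length of the operation sequence: assuming the claim holds after $t$ operations, show that both an insertion and a deletion map $D_n$ to $D_{n \pm 1}$. The rebuild probabilities $1/(n+1)$ and the uniform re-draw of $N$ on crossing a boundary are chosen precisely so that this transition identity holds — this is the heart of Hartline {\it et al.}'s analysis~\cite{hartline2005characterizing}, which we adapt essentially verbatim; the only new verification is that the skip-list coins are redrawn independently of $N$ on each rebuild, so the joint state $(N, \text{coins})$ stays canonical given $S$.

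For robustness and consistency, the scheme keeps $N$ in a range of the form $\{n, \ldots, 2n-1\}$ (outside the transient instant of a rebuild), so $\log N = \Theta(\log n)$ at all times. Robustness then follows from \texttt{RobustSL}'s guarantee that every key lies at depth $O(\log N) = O(\log n)$. Consistency carries over as well: a key with frequency estimate $f_i \geq 1/N$ is placed at depth $O(\log 1/f_i)$ exactly as in \texttt{RobustSL}, while a key with $f_i < 1/N$ has $\log 1/f_i > \log N = \Theta(\log n)$, so its depth $O(\log n)$ is still $O(\log 1/f_i)$. For the update time, an ordinary insertion or deletion does $O(\log n)$ work on the skip-list and independently triggers a rebuild with probability $O(1/n)$; a rebuild reinserts all $n$ elements into a depth-$O(\log n)$ structure in $O(n \log n)$ time, so the expected update cost is $O(\log n) + O(1/n) \cdot O(n \log n) = O(\log n)$, the expectation taken over the private randomness.

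I expect the main obstacle to be the history-independence step: making precise that the composition of the randomized rebuild transitions preserves the canonical distribution $D_n$ on $N$ across arbitrary interleavings of insertions and deletions, and that this canonical distribution on $N$ composes cleanly with the static history independence of the skip-list layout. The robustness, consistency, and timing bounds are comparatively routine consequences of $\log N = \Theta(\log n)$ and the $O(1/n)$ rebuild probability.
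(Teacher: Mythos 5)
Your proposal is correct and follows essentially the same route as the paper: apply the Hartline-style randomized rebuild scheme to \texttt{RobustSL}, inherit weak history independence from Hartline {\it et al.}'s analysis of the cutoff $N$ combined with the static history independence of skip-lists, obtain robustness from $N = \Theta(n)$, and get expected $O(\log n)$ updates from the $O(1/n)$ rebuild probability times the $O(n\log n)$ rebuild cost. You simply make explicit several steps (the induction on the distribution of $N$ given $|S|$, and the independence of the resampled skip-list coins) that the paper leaves implicit in the surrounding discussion.
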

\begin{proof}
Follows from the above updating scheme applied to \texttt{RobustSL}~\cite{zeynali2024robust}. At all times in the above scheme, $N$ differs from $n$ by at most a constant factor, so $O(\log N) \subseteq O(\log n)$. 
Therefore, \texttt{RobustSL} still achieves robustness under this scheme. 
\end{proof}

Next, we develop several more robust, weakly history independent, and dynamic data structures using our new thresholding scheme.

\section{Thresholding}\label{sec:thresholding}

We now present our second contribution, \emph{thresholding}. 
This technique is a simple modification of the
learning-augmented framework that allows
any consistent learning augmented data structure $\mathcal D$ 
to become robust with a negligible effect
on search times. 
We contrast our general method with prior approaches that focus on making individual data structures robust (cf.~\cite{zeynali2024robust,fu2024learning}).

The two prior works on robust learning-augmented data structures both group together all keys with a frequency below a threshold (e.g., $1/n$) into a non-learned structure that has a standard $O(\log n)$ retrieval time~\cite{zeynali2024robust,fu2024learning}. Our new insight is to instead modify the frequencies themselves according to thresholds, allowing the data structure to act as a black box. 
It has been conjectured that methods similar to ours may work in practice (see OpenReviews of~\cite{fu2024learning,chen2025on}). 
We are the first to show that this is true theoretically.
For now, we assume a static setting, where the total number of keys, $n$, is known in advance. We later lift this assumption. 

\begin{definition}[Threshold Frequency Scheme]
In the learning-augmented setting, each key $k_i$ has a corresponding frequency $f_i$. 
A threshold frequency scheme defines a new frequency 
$$f_i' \leftarrow \max\{f_i/2, 1/(2n)\}.$$ 
Each key is then inserted with frequency $f_i'$ rather than $f_i$.
\end{definition}

\begin{theorem}\label{thm:thresh-learning}
Any consistent learning-augmented data structure $\mathcal D$ can be made robust via our threshold frequency scheme.
\end{theorem}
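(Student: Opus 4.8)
The plan is to verify two things: first, that inserting keys with the modified frequencies $f_i'$ still yields a consistent data structure (i.e., key $k_i$ is retrieved in $O(\log 1/f_i')$ time, which we must show is $O(\log 1/f_i)$); and second, that the scheme enforces robustness, namely that every key has depth $O(\log n)$ regardless of how adversarial the original frequencies are. The key observations driving both parts are that $f_i' \le f_i$ always (since $f_i/2 \le f_i$ and, because the $f_i$ values form a valid frequency distribution with $\sum_i f_i \le 1$ over $n$ keys, the floor $1/(2n)$ only raises a frequency that was already at most $1/n$), and that $f_i' \ge 1/(2n)$ always by construction.

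First I would handle consistency. Since $\mathcal D$ is consistent, after inserting with frequency $f_i'$ the retrieval time for $k_i$ is $O(\log 1/f_i')$. We have $1/f_i' = \min\{2/f_i,\, 2n\} \le 2/f_i$, so $\log 1/f_i' \le \log(2/f_i) = 1 + \log 1/f_i = O(\log 1/f_i)$. Hence the modified structure retrieves $k_i$ in $O(\log 1/f_i)$ time, so consistency (up to the constant hidden in the asymptotics) is preserved. I would also note the frequencies $f_i'$ still sum to at most $1$ — in fact $\sum_i f_i' \le \sum_i (f_i/2) + \sum_i 1/(2n) \le 1/2 + 1/2 = 1$ — so they constitute a legal input to $\mathcal D$, which is what lets us treat $\mathcal D$ as a black box.

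Next I would handle robustness. Because $f_i' \ge 1/(2n)$ for every key, consistency of $\mathcal D$ applied to the frequencies $f_i'$ gives retrieval time $O(\log 1/f_i') \le O(\log (2n)) = O(\log n)$ for every key $k_i$, uniformly. This bound holds no matter what the adversary sets the original $f_i$ to — even $f_i = 0$ or some tiny adversarial value is clamped up to $1/(2n)$ — so no key ever exceeds depth $O(\log n)$. Combining the two parts: the thresholded structure retrieves each $k_i$ in time $O(\min\{\log 1/f_i,\ \log n\})$, which is exactly consistency plus robustness.

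The main obstacle, and the only genuinely delicate point, is confirming that feeding $\{f_i'\}$ into $\mathcal D$ is legitimate — i.e., that $\mathcal D$'s consistency guarantee is stated for arbitrary weight vectors summing to at most $1$ rather than for the ``true'' access distribution, and that the guarantee degrades only by constant factors when we perturb $f_i \mapsto f_i'$. I would address this by appealing to the fact that known consistent constructions (e.g.~\cite{lin2022learning,zeynali2024robust}) build the structure purely from the supplied weights and their guarantee is a function of those weights alone; the factor-$2$ and floor adjustments change each $\log 1/f_i$ by at most an additive constant, which is absorbed into the $O(\cdot)$. A secondary, minor point is that the definition assumes $n$ is known in advance so that the floor $1/(2n)$ can be computed; this is exactly the static assumption flagged in the text, to be lifted later.
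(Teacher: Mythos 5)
Your proposal is correct and follows essentially the same route as the paper's proof: verify that the thresholded frequencies still sum to at most $1$ (so they form a legal input to $\mathcal D$), note that $f_i'\geq f_i/2$ preserves consistency up to an additive constant in the logarithm, and use the floor $f_i'\geq 1/(2n)$ to get the uniform $O(\log n)$ bound. One small blemish: your opening claim that $f_i'\leq f_i$ always is false (a key with $f_i<1/(2n)$ is raised to $1/(2n)>f_i$), but nothing in your actual argument relies on it, so the proof stands as written.
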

\begin{proof}
By definition, for any assignment of frequencies that sum to no more than 1, $\mathcal D$ guarantees that each $k_i$ is retrieved in time at most $O(\log 1/f_i)$. We first confirm that our new thresholding scheme has frequencies that sum to no more than 1. Indeed, since at most $n$ keys can increase in frequency by at most $1/(2n)$, we have that $\sum_{i=1}^n f_i' \leq \sum_{i=1}^nf_i/2 + n \times 1/(2n) \leq 1/2 + 1/2 = 1$.

Under this scheme, each $k_i$ is retrieved in time at most $O(\log 2/f_i) \subseteq O(\log 1/f_i)$, so consistency w.r.t. the original frequency assignment is preserved. All $f_i'\geq 1/(2n)$, so any key's depth is at worst $O(\log n)$. 
Thus, $\mathcal D$ is now robust. 
\end{proof}

Theorem~\ref{thm:thresh-learning} and prior work implies new consistent and robust implementations of treaps~\cite{lin2022learning,chen2025on} and B-treaps~\cite{chen2025on}. We emphasize that any new work in this field now only needs to satisfy consistency as our thresholding scheme can immediately be applied to achieve robustness. 

Next, we extend our result to another field of work called \emph{biased data structures} and show that any biased data structure implies a consistent, robust learning-augmented data structure via thresholding. In this field, each key $k_i$ is associated with a weight $w_i > 0$. 
We call a data structured \emph{biased} if it is a dynamic data structure with the following guarantee. After inserting it into the structure, retrieving key $k_i$ takes $O(\log W/w_i)$ time,
where $W = \sum_{i=1}^n w_i$. Several data structures, including the binary search tree~\cite{bent1985biased,atallah1994biased}, skip-list~\cite{bagchi2005biased}, 
treap~\cite{seidel1996randomized},
zip tree~\cite{tarjan2021zip,gila2023zip}, zip-zip tree~\cite{gila2023zip}, skip-list tree~\cite{Erickson,bagchi2005biased}, B-tree~\cite{feigenbaum1983two}, and energy-balanced tree~\cite{goodrich2000competitive}
can be implemented as biased data structures.\footnote{
Some of the cited structures hold the result in expectation.
}
See also, e.g.,~\cite{bose2014biased,bose2013history,goodrich2010priority,arya2007simple}.

\begin{lemma}\label{lem:biased-consistent}
Any biased data structure $\mathcal B$ is consistent in the learning-augmented setting.
\end{lemma}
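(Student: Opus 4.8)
The plan is to exhibit a weight assignment for $\mathcal B$ that simulates any given frequency assignment from the learning-augmented setting, and then check that the biased retrieval bound collapses to the consistency bound. Concretely, given a learning-augmented instance with keys $k_1, \dots, k_n$ and frequency estimates $f_1, \dots, f_n \in (0,1]$ with $\sum_j f_j \le 1$, I would instantiate $\mathcal B$ with weights $w_i \leftarrow f_i$, inserting the keys in any order. Since a biased data structure imposes no constraint on the weights beyond positivity, this is a legitimate instance of $\mathcal B$, and being an ordered structure it still supports the dictionary and inexact queries required of a learning-augmented data structure.

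The heart of the argument is a short chain of inequalities. Let $W = \sum_{j=1}^n w_j = \sum_{j=1}^n f_j$. Because every $w_j > 0$ we have $W \ge w_i$, so $W/w_i \ge 1$ and the bound $O(\log(W/w_i))$ is meaningful (reading $O(\log x)$ as $O(1 + \log x)$ to absorb the boundary case $W = w_i$, exactly as is implicit in the definition of consistency). On the other hand, $\sum_j f_j \le 1$ gives $W/w_i = W/f_i \le 1/f_i$, hence
\[
O\!\left(\log \frac{W}{w_i}\right) \subseteq O\!\left(\log \frac{1}{f_i}\right).
\]
Thus $\mathcal B$ retrieves each $k_i$ in $O(\log 1/f_i)$ time, which is precisely consistency with respect to the frequencies $f_i$.

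Two points warrant a sentence of care rather than real work. First, $W/w_i$ is invariant under rescaling all weights by a common factor, so nothing is lost by taking $w_i = f_i$ verbatim rather than a normalized multiple; the only fact used is $\sum_j f_j \le 1$. Second, since $\mathcal B$ is dynamic and its guarantee refers to the total weight present, I would observe that the running total of weights never exceeds $\sum_j f_j \le 1$ at any point during the insertions, so $W/w_i \le 1/f_i$ holds throughout rather than only at the end. I do not expect any genuine obstacle here; the only subtlety is stating the normalization and the $O(\cdot)$ convention precisely enough that the set containment above is unambiguous.
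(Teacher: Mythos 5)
Your proposal is correct and follows the same argument as the paper: set $w_i \leftarrow f_i$, observe $W = \sum_j f_j \le 1$, and conclude that the biased bound $O(\log W/w_i)$ is contained in $O(\log 1/f_i)$. The extra remarks on rescaling invariance and the $O(\cdot)$ convention are fine but not needed beyond what the paper's one-line proof already does.
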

\begin{proof}
In the learning-augmented setting, we set $w_i \leftarrow f_i$. Thus, $W = \sum_{i=1}^nf_i \leq 1$. Therefore, each key $k_i$ can be found in time $O(\log 1/f_i)$ by properties of $\mathcal B$. 
\end{proof}

\begin{theorem}
Any biased data structure can be made into a consistent, robust learning-augmented data structure via  thresholding
\end{theorem}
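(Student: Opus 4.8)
The plan is to chain together the two results we already have. By Lemma~\ref{lem:biased-consistent}, any biased data structure $\mathcal B$ becomes a consistent learning-augmented data structure once we identify each weight $w_i$ with the frequency $f_i$; the only thing that this identification relies on is that the frequencies sum to at most $1$, so that $W \le 1$ and the biased retrieval bound $O(\log W/w_i)$ collapses to the consistency bound $O(\log 1/f_i)$. Then, by Theorem~\ref{thm:thresh-learning}, any consistent learning-augmented data structure can be made robust by running it on the thresholded frequencies $f_i' = \max\{f_i/2,\,1/(2n)\}$ in place of $f_i$. Composing the two gives a consistent, robust learning-augmented data structure, which is exactly the claim.

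Concretely, I would carry this out in three short steps. First, invoke Lemma~\ref{lem:biased-consistent} to conclude that $\mathcal B$, fed weights $w_i \leftarrow f_i$, is consistent. Second, feed this consistent structure the thresholded frequencies $f_i'$: the proof of Theorem~\ref{thm:thresh-learning} already checks that $\sum_i f_i' \le 1$ and that each key is retrieved in $O(\log 2/f_i) \subseteq O(\log 1/f_i)$ time while simultaneously having depth $O(\log n)$, i.e.\ the structure is now both consistent (w.r.t.\ the original $f_i$) and robust. Third, note that since $\mathcal B$ is dynamic and thresholding only rescales the input weights, the composed structure still supports updates, so the resulting object is a bona fide learning-augmented data structure of the desired type. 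Nothing beyond quoting the two earlier results is needed.

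There is essentially no hard part here: the theorem is a corollary stitched from Lemma~\ref{lem:biased-consistent} and Theorem~\ref{thm:thresh-learning}, and the only mild subtlety worth a sentence is making explicit that the thresholding transformation is applied to the weights handed to $\mathcal B$ (so $\mathcal B$ truly is used as a black box, exactly as advertised in the section introduction). I would therefore keep the proof to two or three lines, citing both prior statements, and perhaps remark that this immediately yields consistent, robust versions of all the biased structures listed just before the theorem (biased BSTs, biased skip-lists, treaps, zip and zip-zip trees, skip-list trees, B-trees, and energy-balanced trees), some of which hold in expectation inheriting that caveat from the underlying biased structure.
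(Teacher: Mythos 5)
Your proposal is correct and follows exactly the paper's own argument: the theorem is proved by composing Lemma~\ref{lem:biased-consistent} (biased $\Rightarrow$ consistent via $w_i \leftarrow f_i$) with Theorem~\ref{thm:thresh-learning} (consistent $\Rightarrow$ robust via thresholding). The extra details you spell out (the sum bound, the $O(\log 2/f_i)$ collapse, applying the thresholded frequencies as the weights handed to $\mathcal B$) are exactly what the paper leaves implicit.
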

\begin{proof}
Follows from Lemma~\ref{lem:biased-consistent} and Theorem~\ref{thm:thresh-learning}.
\end{proof}

This result implies several new data structures that match state-of-the-art guarantees in the learning-augmented setting. 
In particular, our results also imply the 
first consistent and robust external-memory learning-augmented data structure by applying thresholding to~\cite{chen2025on,feigenbaum1983two}. Our scheme also satisfies an open question of~\cite{zeynali2024robust}. They asked whether one can maintain a consistent and robust data structure in which some keys have frequency estimates and others do not. We can easily handle this by assuming any key without an estimate has frequency 0. These keys will be thresholded and have $O(\log n)$ search time without affecting the search time of other keys. 

Lastly, any
of the above structures can made dynamic, 
i.e. allow for changes in $n$, either by using the amortized update scheme proposed by 
Zeynali {\it et al.} for \texttt{RobustSL}~\cite{zeynali2024robust} or, 
if the underlying data structure is history independent, 
using the history independent updating scheme we propose in Section~\ref{sec:history-ind-scheme}. 
In this case, we would set $f_i' \leftarrow \max\{f_i/2, 1/(2N)\}$ and achieve similar performance and robustness guarantees as~\cite{zeynali2024robust}.
Thus, we can also implement Theorem~\ref{thm:hist-ind} using the following history independent data structures: the zip-tree~\cite{tarjan2021zip}, zip-zip tree~\cite{gila2023zip}, 
skip-list tree~\cite{gila2023zip,Erickson},
treap~\cite{lin2022learning,chen2025on}, and B-treap~\cite{chen2025on,golovin2008uniquely}.

\section{Pairing: Obtaining Strong History Independence}

Our results in Sections~\ref{sec:history-ind-scheme} and~\ref{sec:thresholding} implied the first weakly history independent, dynamic, consistent, and robust learning-augmented data structures. Here, we develop an orthogonal technique we call \emph{pairing} that also transforms any consistent data structure into a robust one while additionally providing \emph{strong} history independence. 

Define a \emph{paired data structure} $\mathcal D_P$ as two copies of the data stored in two separate data structures $\mathcal D_C$ and $\mathcal D$.\footnote{
To save space, $\mathcal D_C$ and $\mathcal D$ may store pointers to shared data rather than maintaining actual copies.
}
$\mathcal D_C$ is any strongly history independent consistent learning-augmented data structure (e.g., the biased zip-zip tree~\cite{gila2023zip} via Lemma~\ref{lem:biased-consistent}) and $\mathcal D$ is a strongly history independent non-learned data structure with support for inexact queries and $O(\log n)$-retrieval time (e.g., a standard skip-list or zip-zip tree~\cite{gila2023zip}), where $n$ is the number of keys in the structure currently. 

We insert and delete elements from both structures in tandem. To search, we do the following. We begin a tentative search in $\mathcal D_C$ for $\gamma \log n$ steps for some constant $\gamma > 0$. If we find the element, we are done. If we have made $\gamma \log n$ steps and have not found the element, we terminate the search in $\mathcal D_C$ and search $\mathcal D$ for the element. For some query key $k_i$ with frequency $f_i$ that exists in $\mathcal D_P$, it immediately follows that this takes $O(\min\{\log 1/f_i, \log  n\})$ time, satisfying both consistency and robustness. 
Using standard techniques, we can implement predecessor and range queries in $O(\log n)$-time in $\mathcal D$, matching the results of~\cite{zeynali2024robust}. 

Unlike the robust learning-augmented data structures of~\cite{fu2024learning,zeynali2024robust} and our thresholding scheme from above, the underlying structure of $\mathcal D_P$ does not depend on the number of keys inserted, $n$ (apart from its size). 
Thus, we do not need an amortized updating scheme in which the structure periodically rebuilds itself to maintain its correctness.
As a result, the strong history independence of 
$\mathcal D_C$ and $\mathcal D$ 
is preserved in the dynamic setting and we immediately have the following.

\begin{theorem}
There exists a strongly history independent, dynamic, consistent, and robust learning-augmented data structure.
\end{theorem}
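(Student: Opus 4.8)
The plan is to instantiate the paired data structure $\mathcal D_P$ described immediately above the statement and verify that it satisfies all four properties. First I would fix the two components: let $\mathcal D_C$ be any strongly history independent, consistent learning-augmented data structure (for concreteness, the biased zip-zip tree of~\cite{gila2023zip}, which is consistent by Lemma~\ref{lem:biased-consistent} after setting $w_i \leftarrow f_i$), and let $\mathcal D$ be a strongly history independent non-learned structure supporting inexact queries with $O(\log n)$ retrieval time (e.g., a standard skip-list or zip-zip tree). Updates are performed in tandem on both copies, and searches proceed by running a tentative search in $\mathcal D_C$ for $\gamma\log n$ steps and, upon failure, falling back to $\mathcal D$.

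Next I would establish each property in turn. For \emph{consistency and robustness}: a query for key $k_i$ with frequency $f_i$ either completes within $\gamma\log n$ steps in $\mathcal D_C$ — which happens once $\gamma\log n \geq c\log(1/f_i)$ for the consistency constant $c$ of $\mathcal D_C$ — or else completes in $O(\log n)$ time via $\mathcal D$; in all cases the cost is $O(\min\{\log 1/f_i, \log n\})$, which is simultaneously consistent (matching $O(\log 1/f_i)$) and robust (never exceeding $O(\log n)$, regardless of how adversarial the frequency estimates are). Inexact queries (predecessor, successor, range) are answered directly in $\mathcal D$ in $O(\log n)$ time by standard techniques, matching~\cite{zeynali2024robust}. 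For \emph{dynamism}: each insertion or deletion touches both structures, costing $O(\log n)$ per update, and crucially the layout of neither $\mathcal D_C$ nor $\mathcal D$ depends on the count $n$ beyond the obvious dependence on the set of stored keys — so, unlike \texttt{RobustSL} and the thresholding constructions, no amortized periodic rebuild is required.

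For \emph{strong history independence}, which I expect to be the main obstacle, the key observation is that $\mathcal D_P$'s memory representation is exactly the pair $(\mathrm{rep}(\mathcal D_C), \mathrm{rep}(\mathcal D))$ together with whatever shared-pointer bookkeeping is used. Since each of $\mathcal D_C$ and $\mathcal D$ is strongly history independent, its representation is a fixed (randomized) function of its current key-value contents alone, with a distribution independent of the operation sequence that produced it. Because both copies always hold the same content set and are updated deterministically in lockstep, the joint distribution of $(\mathrm{rep}(\mathcal D_C), \mathrm{rep}(\mathcal D))$ is likewise a function of the current contents only. One subtlety to address carefully is the randomness: if $\mathcal D_C$ and $\mathcal D$ draw from independent random sources (or a shared source used in a content-determined way), the product distribution is still content-determined, so strong history independence of the pair follows; I would state this as the composability of strong history independence under taking (coordinated) products and note that the tentative-search procedure is read-only and hence does not affect the stored representation. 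Assembling these observations gives the theorem.

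\begin{proof}
Instantiate $\mathcal D_P$ as the paired structure described above, with $\mathcal D_C$ a strongly history independent consistent learning-augmented data structure (e.g., the biased zip-zip tree~\cite{gila2023zip}, consistent by Lemma~\ref{lem:biased-consistent}) and $\mathcal D$ a strongly history independent non-learned structure with inexact-query support and $O(\log n)$ retrieval time (e.g., a skip-list or zip-zip tree~\cite{gila2023zip}). A query for a present key $k_i$ finishes in $O(\log 1/f_i)$ time in $\mathcal D_C$ once $\gamma\log n$ exceeds its consistency bound, and otherwise finishes in $O(\log n)$ time in $\mathcal D$; hence every search costs $O(\min\{\log 1/f_i,\log n\})$, giving consistency and robustness simultaneously, and inexact queries cost $O(\log n)$ via $\mathcal D$. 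Each update costs $O(\log n)$ since it is applied to both structures, and because neither structure's layout depends on $n$ beyond its current contents, no rebuild is ever needed, so the construction is dynamic. Finally, the memory image of $\mathcal D_P$ is the pair of images of $\mathcal D_C$ and $\mathcal D$ (plus content-determined bookkeeping); since both are strongly history independent and always store the same content set, updated in lockstep, their joint distribution depends only on the current contents, and the read-only tentative search does not alter it. Thus $\mathcal D_P$ is strongly history independent, dynamic, consistent, and robust.
\end{proof}
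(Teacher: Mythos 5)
Your proposal is correct and follows essentially the same route as the paper: instantiate the paired structure with a strongly history independent consistent $\mathcal D_C$ and a strongly history independent non-learned $\mathcal D$, derive consistency and robustness from the $O(\min\{\log 1/f_i,\log n\})$ fall-back search, and obtain strong history independence because no $n$-dependent rebuilds are needed. You simply make explicit the composability-of-representations argument that the paper leaves implicit.
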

\begin{proof}
Follows from instantiating a paired data structure using a strongly history independent consistent data structure $\mathcal D_C$, e.g.~\cite{gila2023zip,tarjan2021zip,bagchi2005biased,chen2025on}, and a strongly history independent non-learned data structure $\mathcal D$ which supports inexact queries, e.g.~\cite{gila2023zip,tarjan2021zip,golovin2008uniquely,chen2025on}.
\end{proof}

Thus, we have achieved a stronger privacy guarantee at the cost of doubling space usage and increasing search times by a constant factor for less-frequent elements. 
In addition, unlike our thresholding scheme, pairing is unable to support keys that have not been assigned a frequency. 
In the following section, we collect experimental data to determine the impact of these tradeoffs in practice. 
\section{Experiments}

In this section, we discuss experimental results in the static setting, i.e., with fixed $n$.
We compare the biased zip-zip tree~\cite{gila2023zip} with our thresholding scheme against
other learning-augmented data structures, including \texttt{RobustSL}~\cite{zeynali2024robust}. 
In general, we find that our consistent, robust zip-zip tree has comparable or better performance to \texttt{RobustSL} while 
also being much easier to implement. Indeed, \texttt{RobustSL} requires a ground-up rewrite of skip-lists and
introduces several tunable parameters that must be optimized. In contrast, converting a 
zip-zip tree implementation to a biased zip-zip tree with thresholding requires changing fewer than 
five lines of code. Similarly, implementing a paired zip-zip tree is straightforward: one initializes two 
zip-zip trees, one standard and one biased, performs all updates in tandem, and applies the fall-back search procedure described above. 

We test the \texttt{RobustSL}~\cite{zeynali2024robust}, biased zip-zip tree~\cite{gila2023zip}, 
biased zip-zip tree with thresholding (threshold zip-zip tree), paired data structure with $\gamma = 1$ implemented via zip-zip trees (paired zip-zip tree), 
the learning-augmented treaps of Lin {\it et al.}~\cite{lin2022learning} (L-Treap) and Chen {\it et al.}~\cite{chen2025on} (C-Treap), 
as well as the (non-learned) AVL tree~\cite{AVL}. 
Following~\cite{zeynali2024robust}, we consider the Zipfian distribution~\cite{powers-1998-applications}, commonly used to model text frequencies, 
in which each key of rank $i$ (1 to $n$) is assigned frequency $1/i^\alpha$, for some parameter $\alpha \geq 1$. 
The keys are inserted from $1$ to $n$ into the respective structure with these frequencies, 
100000 queries are made, where each query of key $i$ occurs with frequency $1/i^\alpha$, 
and we count the average number of comparisons made over all queries. 

To test robustness,~\cite{zeynali2024robust} use the following ``noisy frequency'' scheme. 
They define a parameter $\delta\in [0,1]$ and 
adversarial rank $\hat i \leftarrow i \times (1 - \delta) + \delta \times (n - i + 1)$.
We insert the keys under these adversarial ranks (exactly reversed when $\delta = 1$) but perform queries using the original 
ranks. Thus, the learning-augmented structures are tested with adversarially bad frequency estimates. 
To further emphasize the influence of adversarially-chosen frequencies, we also define the inverse power distribution, in which each key of
rank $i$ is assigned frequency $1/\alpha^i$ for some $\alpha \geq 1$. 

In power law relations, such as the Zipfian distribution, the smallest frequencies are still inverse polynomial in $n$. 
Thus, any non-robust learning-augmented data structure
should still expect to have $O(\log n)$ retrieval times.
In contrast,
the smallest frequencies in the inverse power distribution are inverse exponential in $n$, which
may degrade to linear search times in non-robust structures under adversarially-chosen frequencies.
As a result, this distribution better demonstrates the power of robustness in the learning-augmented setting.

Our tests are as follows. 
The Zipf parameter test examines different values of $\alpha$ under perfect frequency estimates ($\delta = 0$) with a fixed $n = 2000$; 
the Noisy Zipfian test uses $\alpha = 2$ and $\delta = 0.9$; and the 
Inverse Power test uses $\alpha = 1.01$ (chosen for computational tractability) and $\delta = 0.9$.
Lastly, our size test measures \texttt{RobustSL}'s node count when its keys follow the Zipfian distribution with $\alpha = 2$. 
The other structures considered have exactly $n$ nodes, 
except the paired zip-zip tree, which has $2n$ nodes. 


\begin{figure*}
    \centering
    \includegraphics[width=\linewidth]{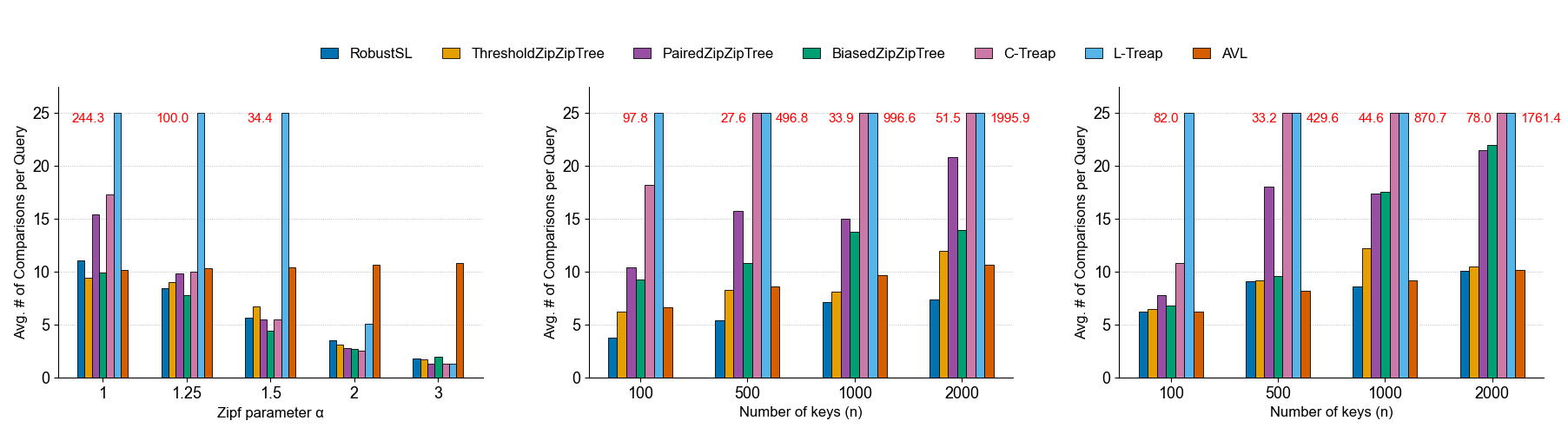}
    \caption{(a) Zipf Parameter test ($n=2000$, $\delta=0$), (b) Noisy Zipfian test ($\alpha = 2$, $\delta=0.9$), and (c) Inverse Power test ($\alpha =1.01$, $\delta=0.9$).
    Values overflowing 25 comparisons indicated with a number next to the bar.}
    \label{fig:zipf}
\end{figure*}

\begin{figure}
    \centering
    \includegraphics[width=\linewidth]{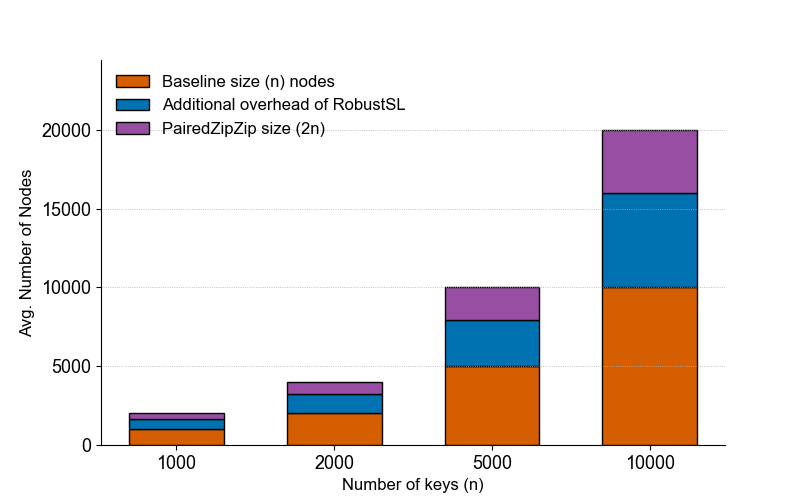}
    \caption{Size test (Zipfian, $\alpha=2$)}
    \label{fig:size}
\end{figure}


\subsection{Experimental Results}

We find that the threshold zip-zip tree achieves query performance within 2 comparisons of \texttt{RobustSL} on average while only using two-thirds of the space, demonstrating an advantageous balance between efficiency and memory cost. 
We also find that the paired zip-zip tree achieves query performance within at most $2 \times$ the comparisons of \texttt{RobustSL} and the threshold zip-zip tree on average, while using $\approx 25 \%$ more space than \texttt{RobustSL}.

\textbf{Zipfian Workloads.} 
In Figure~\ref{fig:zipf}a, we perform the Zipfian Parameter Test, which varies $\alpha$ under perfect predictions. 
As the skew increases, we can see that all learned structures are indeed consistent and take advantage of the frequency distribution.
The non-robust structures (biased zip-zip, L-Treap, and C-Treap) benefit the most from a higher skew. 
Nevertheless, \texttt{RobustSL} and the threshold zip-zip tree trail by just 2-3 comparisons on average and remain closely matched with each other. 
The paired zip-zip tree also performs as well as the others as $\alpha$ increases, though has poorer constant factors for $\alpha = 1$.

Next, we introduce noise into the predictions with $\alpha=2$ and $\delta=0.9$ (see Figure~\ref{fig:zipf}b). 
We observe that the
non-robust structures C-Treap and L-Treap degrade sharply, with average comparisons growing linearly with $n$.\footnote{
The L-Treap's generally poor performance is also due to the fact that it requires the input to be inserted in uniformly permuted order to remain balanced~\cite{lin2022learning,zeynali2024robust,chen2025on}.
}
In contrast, both \texttt{RobustSL} and threshold zip-zip tree stay below $7$ comparisons on average, with the threshold zip-zip tree trailing by at most 2-3 comparisons.
They also still take advantage of the frequency distribution despite the noise, consistently performing better than the AVL tree.
The paired zip-zip tree remains within 2$\times$ the cost of the threshold zip-zip tree throughout. 
Interestingly, the biased zip-zip tree is less affected than both treaps, though it still performs worse than the threshold zip-zip tree and \texttt{RobustSL}. 





\textbf{Inverse Power Workloads.} 
We designed the inverse power test to show that, under schemes in which frequencies grow exponentially small in $n$, 
adversarial predictions force linear performance in any consistent, non-robust learned data structure (see Figure~\ref{fig:zipf}c).
The data supports this hypothesis, as all three non-robust structures (biased zip-zip, L-Treap, and C-Treap) appear to scale linearly in $n$.
In contrast, threshold zip-zip tree and \texttt{RobustSL} match the roughly $\log n$ growth of the AVL tree.
Furthermore, despite performing worse than the biased zip-zip tree for small $n$, the paired zip-zip tree has a clear logarithmic growth of its query cost compared to the linear growth of the biased zip-zip tree. 
Thus, we have experimental evidence that our methods provide robustness even in an extreme adversarial setting.


\textbf{Space Overhead.} 
Another consideration in choosing an appropriate database structure is space usage.
All tree-based structures considered maintain exactly $n$ nodes, while \texttt{RobustSL} incurs significant overhead due to its skip-list layering structure. 
Accordingly,
we find that \texttt{RobustSL} uses roughly $1.5\times$ space,
which may be a limiting factor if applied to large-scale
datasets.
The paired zip-zip tree maintains two representations of the dataset, yielding a space cost of $2n$ (see Figure~\ref{fig:size}).


Our results suggest that the threshold zip-zip tree may offer a simpler and practical alternative to \texttt{RobustSL} for learning-augmented workloads, achieving comparable robustness and performance under synthetic workloads while significantly reducing space usage.
In addition, our results for the paired zip-zip tree show a clear tradeoff between security and efficiency. The paired zip-zip tree offers strong history independence in the dynamic setting, yet at the cost of a 2$\times$ increase in space and time usage compared to our threshold zip-zip tree. 

\section{Conclusion}
In this paper, we initiated a study of privacy and security in 
learning-augmented data structures by proposing the first 
dynamic, consistent, robust, and history independent learned data structures
using our new techniques, thresholding and pairing. 
Our techniques also 
more than triple the number of practical learning-augmented data structures that are consistent and robust. 

Future work could experimentally validate our new data structures in the dynamic setting or examine whether better choices for $\gamma$ exist in the paired zip-zip tree (e.g., $\gamma = \{3.82, 1.3863\}$ for height and expected depth of a standard zip-zip tree respectively~\cite{gila2023zip}). 
One could also consider how to narrow the efficiency gap between our strongly and weakly history independent robust learned data structures. 

We also observe that, 
in general, any biased data structure in which a bound $W^*$ on $W$ is known in advance can be 
implemented with thresholding such that each key with weight $w_i$ has depth $O(\min\{\log W^*/w_i, \log n\})$. 
We wonder if this could be useful elsewhere, e.g., in variants of Sleator and Tarjan's link-cut tree~\cite{sleator1981data} or to generalize biased data structures which require polynomially-bounded weights such as Goodrich and Strash's priority range tree~\cite{goodrich2010priority}.




\bibliographystyle{plain}
\bibliography{refs}

\appendix
\section{History Independence}\label{sec:hist-ind-ex}
A data structure is history independent~\cite{hartline2005characterizing,naor2001anti} if its history of operations, 
such as insertions, deletions, or queries, cannot be inferred from
its internal memory representation beyond what is implied
by its current data contents, known as its \emph{state}. 
There are two types of history independence,
\emph{strong} and \emph{weak}. 
Intuitively, one can think of strong history independence as preventing data leakage across multiple data breaches, whereas weak history independence guarantees protection for only a single breach.

Formally, let us consider 
two states $A$ and $B$, and two sets of operations $X$ and $Y$. Let 
$A$ be the initial state and $B$ be the final state of our data
structure.

\begin{definition}[Strong History Independence]
    A data structure is \textit{strongly history independent} if, for 
    any two sets of operations $X$ and $Y$ that result in the same 
    logical state $B$, the resulting internal memory representation induced by each set of operations is indistinguishable.  
\end{definition}


\begin{definition} [Weak History Independence]
    A data structure is \textit{weakly history independent} if this 
    indistinguishable property holds only when $A$ is the initial (e.g., empty) state. 
\end{definition}


\end{document}